\newtheorem{definition}{Definition}
\newtheorem{construction}{Construction}
\newtheorem{theorem}{Theorem}
\newcommand{\C}{\mathcal{C}}
\newcommand{\I}{\mathcal{I}}
\newcommand{\E}{\mathcal{E}}
\begin{document}

\vspace*{5mm}

\noindent
\textbf{\LARGE Code Constructions based on \\Reed-Solomon Codes
}
\thispagestyle{fancyplain} \setlength\partopsep {0pt} \flushbottom
\date{}

\vspace*{5mm}
\noindent
\textsc{Michael Schelling} \hfill \texttt{michael.schelling@uni-ulm.de} \\
{\small Institute of Communications Engineering, University of Ulm, Germany} \\
\textsc{Martin Bossert} \hfill \texttt{martin.bossert@uni-ulm.de} \\
{\small Institute of Communications Engineering, University of Ulm, Germany} \\

\medskip

\begin{center}
\parbox{11,8cm}{\footnotesize
\textbf{Abstract.} Reed--Solomon codes are a well--studied code class which fulfill the Singleton bound with equality. 
However, their length is limited to the size $q$ of the underlying field $\mathbb{F}_q$. 
In this paper we present a code construction which
yields codes with lengths of factors of the field size. 
Furthermore a decoding algorithm beyond half the minimum distance is given and analyzed.

}
\end{center}

\baselineskip=0.9\normalbaselineskip

\section{Introduction}

\lhead{} \rhead{}
\chead[\fancyplain{}{\small\sl\leftmark}]{\fancyplain{}{\small\sl
\leftmark}} \markboth{\hspace{-2,15cm}Eighth International Workshop
on Optimal Codes and Related Topics\\ July 10-14, 2017, Sofia, Bulgaria \hfill pp. xxx-xxx}{}

	Reed--Solomon (RS) codes were introduced in \cite{RS60} and are well--studied and widely used in various applications. This is due to the fact that RS Codes are maximum distance separable (MDS) codes and due to the existence of efficient decoding algorithms. The classical decoding algorithms are the Peterson algorithm \cite{peterson}, the Berlekamp--Massey algorithm \cite{BM}, and the Sugiyama \textit{et al.} algorithm \cite{Sugiyama}.
	Recently, in 2008, Wu \cite{Wu08} described list decoding algorithms based on an extension of the Berlekamp-Massey algorithm. \\
	A main drawback of these codes is that their length is limited be the size $q$ of the used field $\mathbb{F}_q$. \\
	In \cite{Wu-GII} Wu introduced generalized integrated (GII) RS-Codes which allows to construct longer codes based on RS codes.\\
	The code construction presented in this paper is based on a generalized version of the Plotkin-construction \cite{plotkin} and yields the same length and minimum-distance as Wu's GII codes for an interleaving degree $\nu = 3$.

\newpage

\pagestyle{myheadings} \markboth{OC2017}{Michael Schelling, Martin Bossert}

\section{Code Construction}
	Let $n,q\in\mathbb{N}$ with $n\leq q$ and $q = p^m$.\\
	Consider the RS Codes $ \C_a(n,k_a,d_a), \C_b(n, k_b, d_b)$ and $ \C_z(n, k_z, d_z) $ over the field $\mathbb{F}_q$, such that
	$$\C_z \subseteq \C_b \subseteq \C_a.$$
	This especially implies
	\begin{align*}
		d_a \leq d_b \leq d_z	\text{ and }
		k_a \geq k_b \geq k_z.
	\end{align*}
	 \begin{construction}
	 	The code $\C$ is defined using a generalized \emph{Plotkin-construction}. 
	Let $\alpha \in \mathbb{F}_q\backslash\{0,1\}$.
	\begin{equation}\label{construction}
		\C := \Bigl\{ c=\bigl(c_a|c_b|c_z\bigr) = \bigl(a|a+b|a+\alpha b+z\bigr)\ \Big{|}\ a\in\C_a, b\in\C_b, z\in\C_z\Bigr\}.
	\end{equation}
	 \end{construction}
 	\begin{theorem}
	 	The parameters of the code $\C$ are
	 	\begin{align*}
	 	n_0 &=3n \\
	 	k_0 &= k_a + k_b + k_z \\
	 	d_0 &= \min\{3d_a, 2d_b, d_z\} .
	 	\end{align*}
 	\end{theorem}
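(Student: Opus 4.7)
The length $n_0 = 3n$ is immediate from the definition. For the dimension, I would argue that the encoding map $(a,b,z) \mapsto (a \mid a+b \mid a+\alpha b + z)$ from $\C_a \times \C_b \times \C_z$ to $\C$ is a bijection: given a codeword one recovers $a$ from the first block, then $b$ as the difference of the second and first blocks, and finally $z = (a+\alpha b + z) - a - \alpha b$ from the third. Hence $\dim \C = k_a + k_b + k_z$.

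The real content is the minimum distance. Since $\C$ is linear, I would fix a nonzero codeword $c = (a \mid a+b \mid a+\alpha b + z)$ and show $\mathrm{wt}(c) \geq \min\{3d_a, 2d_b, d_z\}$ by splitting on how many of the three blocks vanish. The governing observation throughout is the chain $\C_z \subseteq \C_b \subseteq \C_a$, which implies $\C_b$ and $\C_z$ are both subcodes of $\C_a$, and $\C_z$ is a subcode of $\C_b$; this lets me upgrade lower bounds whenever a linear combination of $a,b,z$ is forced to lie in a smaller code.

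The case analysis I envision is the following. If all three blocks $a$, $a+b$, $a+\alpha b+z$ are nonzero, each lies in $\C_a$, giving weight $\geq 3d_a$. If exactly one block vanishes, the vanishing relation expresses one of $a,b,z$ as a linear combination of the other two in a way that forces the two remaining nonzero blocks into $\C_b$ (this is where I use $\alpha \neq 0,1$, so that the relations are invertible and preserve $\C_b$-membership); the two nonzero blocks then each contribute $\geq d_b$, for a total $\geq 2d_b$. If two blocks vanish, a similar argument pins the remaining nonzero block into $\C_z$ (again using $\alpha \notin \{0,1\}$ to solve the linear system), giving weight $\geq d_z$. I expect the bookkeeping in the two-blocks-vanish subcase to be the most delicate part, since one must carefully solve for $a,b,z$ in $\mathbb{F}_q$ and verify that the surviving component genuinely belongs to $\C_z$.

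To finish I would exhibit codewords attaining each of the three bounds: $(a\mid a\mid a)$ with $a\in\C_a$ of weight $d_a$ attains $3d_a$; $(0\mid b\mid \alpha b)$ with $b\in\C_b$ of weight $d_b$ (and $\alpha\neq 0$, so $\mathrm{wt}(\alpha b)=d_b$) attains $2d_b$; and $(0\mid 0\mid z)$ with $z\in\C_z$ of weight $d_z$ attains $d_z$. This shows that the lower bound is sharp and hence $d_0 = \min\{3d_a,2d_b,d_z\}$.
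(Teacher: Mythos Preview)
Your argument is correct, and it takes a genuinely different route from the paper. The paper does not carry out any weight analysis at all: it declares the length and dimension ``obvious'' and defers the minimum-distance claim to Section~3, where the existence of a decoding algorithm correcting $\lfloor (d_0-1)/2\rfloor$ errors is taken as the proof that $d(\C)\ge d_0$. Your approach instead establishes $d(\C)\ge \min\{3d_a,2d_b,d_z\}$ directly, by a case split on how many of the three blocks of a nonzero codeword vanish, using the nesting $\C_z\subseteq\C_b\subseteq\C_a$ (and $\alpha\notin\{0,1\}$) to push the surviving blocks into the appropriate subcode; you then exhibit the three extremal codewords to pin down equality. The trade-off is clear: the paper's route is efficient because the decoding algorithm is its main object anyway, so the distance bound comes for free, whereas your route is self-contained, elementary, and also supplies the upper bound $d(\C)\le\min\{3d_a,2d_b,d_z\}$ explicitly---something the paper never actually writes down.
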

 	\begin{proof}
 		The length and dimension are obvious.
 		The minimum distance is proven later by giving a decoding algorithm up to $\left\lfloor\frac{d-1}{2}\right\rfloor$.
	\end{proof}
	
	Comparing the dimension of the code in \eqref{construction} to the one of a GII-RS code of interleaving degree $\nu = 3$ it follows, that the constructed codes are not equivalent.\\
	In the following we choose the parameter $\alpha$ to be a multiplicative generator of the group $\mathbb{F}^*_q$.

\section{Decoding Algorithm}
	\begin{definition}
		Let $c \in\C$  be a transmitted codeword and  $r = c + e$ be received.
		Then $r$ and $e$ consist of the parts
		$$ r = \bigl(r_a | r_b | r_z\bigr), \quad e=\bigl(e_a|e_b|e_z\bigr),$$
		where $r_a, r_b, r_z, e_a, e_b, e_z$ have length $n$ respectively.\\
		
	\end{definition}
	According to the construction of $\C$ the sub code $\C_z$ is the strongest and $\C_a$ is the weakest sub code.
	We take advantage of this property by first decoding in $\C_z$ and then trying to decode in the weaker codes using information from the previous decoding results.
	
	\subsection{Algorithm}
	Let $c = \bigl(a|a+b|a+\alpha b+z\bigr)$ be transmitted and $r = c+e =\bigl(r_a | r_b | r_z\bigr)$ be received.
	Let the the error  $e$ have weight $\tau \leq \lfloor\frac{d_0-1}{2}\rfloor$.
	
	\begin{enumerate}
		\item
		Decode $r_z - \alpha r_b +(\alpha-1) r_a = z + e_z -\alpha e_b + (\alpha-1)e_a$ in $\C_z$.\\
		Let the set of resulting error locations be $\E_{abz}$.
		\item
		Calculate $r_b - r_a = b + e_b - e_a$.\\ 
		Erase all positions from $\E_{abz}$ and decode in a accordingly shortened $\C_b^*$.\\
		Find the corresponding codeword in $\C_b$.
		\item
		Calculate $r_b -b = a + e_b$ and $r_z - \alpha b - z = a + e_z$. \\
		Decode $a + e_a, a + e_b, a + e_z$ in $\C_a$ and receive up to three different solutions $a^{(i)}, i=1,2,3$.
		\item
		Calculate the errors $e^{(i)}_a= r_a - a^{(i)}, e^{(i)}_b = r_b - b- a^{(i)},e^{(i)}_z =  r_z - \alpha b - z - a^{(i)}$ for all $i$ and choose the  $a^{(i)}$ with
		$$\tau_{min} = \min_i \Bigl\{|e^{(i)}_a| + |e^{(i)}_b| + |e^{(i)}_z|\Bigr\}$$ 
		as decoding result.
	\end{enumerate}
\subsection{Correctness of the Algorithm}
\begin{proof}
	The number of errors $\tau$ fulfills
	\begin{align}
	\tau &\leq \left\lfloor\frac{d_z-1}{2}\right\rfloor  \label{eq:d_z}\\
	\tau &\leq \left\lfloor\frac{2d_b-1}{2}\right\rfloor = d_b -1  \label{eq:d_b}\\
	\tau &\leq \left\lfloor\frac{3d_a-1}{2}\right\rfloor. \label{eq:d_a}
	\end{align}
	\begin{enumerate}
		\item	
		The decoding in $\C_z$ is successful as the number of errors in $\E_{abz}^*$ is bounded from above by $\tau$. As overlapping error positions are possible the number of errors is possibly less then $\tau$.
		\item
		The shortened code $\C_b^*$ has parameters $(n_b^* = n-|\E_{abz}^*|, k_b, d_b^* = d_b-|\E_{abz}^*|)$ and has to decode all errors that canceled out in the first step. Let $\I_{abz}$ be the set of these errors.
		For each such erasures of errors at least two errors are necessary, and thus
		\begin{align*}
		|\E_{abz}^*|+2\cdot|\I_{abz}| &\leq \tau.
		\end{align*}
		Together with \eqref{eq:d_b} the decoding of $b$ is successful due to
		\begin{align*}
		|\I_{abz}|&\leq\frac{\tau -|\E_{abz}^*|}{2} \leq \frac{d_b - 1 -|\E_{abz}^*|}{2} =\frac{d_b^*-1}{2}.
		\end{align*}
		\item
		According to \eqref{eq:d_a} it holds that
		$$ \min\Bigl\{|e_a|, |e_b|, |e_z|\Bigr\} < \left\lfloor\frac{d_a-1}{2}\right\rfloor.$$
		W.l.o.g. assume $|e_a|<\left\lfloor\frac{d_a-1}{2}\right\rfloor$.
		Then the decoding of $a + e_a$ in $C_a$ is successful and $a^{(1)}=a$.
		Assume there is an $i$ with $a^{(i)} \neq a^{(1)}$ and
		$$ |e^{(i)}_a| + |e^{(i)}_b| + |e^{(i)}_z| \leq |e^{(1)}_a| + |e^{(1)}_b| + |e^{(1)}_z|. $$
		Then the codewords 
		\begin{align*}
		c^{(1)} = \bigl(a^{(1)}|a^{(1)}| a^{(1)}\bigr)\ \text{ and }\
		c^{(i)} = \bigl(a^{(i)}|a^{(i)}| a^{(i)}\bigr)
		\end{align*}
		are two codewords of a concatenated repetition code with distance
		\begin{align*}
		d(c^{(1)},c^{(i)}) &\leq |e^{(i)}_a| + |e^{(i)}_b| + |e^{(i)}_z| + |e^{(1)}_a| + |e^{(1)}_b| + |e^{(1)}_z| \\
		&\leq 2\tau \stackrel{\eqref{eq:d_a}}{<} 2 \cdot  \left\lfloor\frac{3d_a-1}{2}\right\rfloor < 3d_a .
		\end{align*}
		This is a contradiction to the minimum distance $d_min = 3d_a$ of the concatenated code.
		Thus the decoding of $a$ is successful due to
		$$ 1 = \underset{i}{\mathrm{argmin}} \Bigl\{|e^{(i)}_a| + |e^{(i)}_b| + |e^{(i)}_z|\Bigr\}.$$
	\end{enumerate}
\end{proof}
	
	Note that the number of errors considered in the first step is possibly smaller than the total number of errors $\tau$ due to overlapping error positions. As a result the stated algorithm can decode certain error patterns bevond half the minimum distance.
\section{Simulation}
	The simulation was done for the parameters in table 1 and compared to a MDS code with parameters $(384,216,169)$.	
	\begin{table}[h]\label{tab1}\begin{center}
		\caption{Code Parameters}
		\begin{tabular}{|p{1cm}||p{1cm}|p{1cm}|p{1cm}|}
		\hline
		 \ & n & k & d \\
		 \hline \hline
		 $\C_a$ & 128 & 98 & 31\\
		 \hline
		 $\C_b$ & 128 & 82 & 47\\
		 \hline
		 $\C_z$ & 128 & 36 & 93\\
		 \hline
		 \hline
		 $\C$ &384 & 216 & 93\\
		 \hline
	\end{tabular}
\end{center}
\end{table}

	\includegraphics[width = 1.1\textwidth]{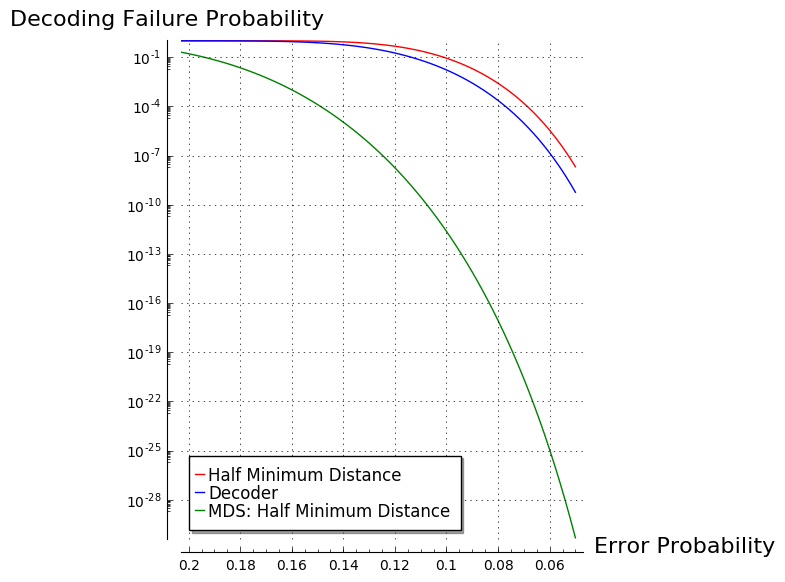}
\section{Conclusion}
	We proposed a new construction for codes based on RS codes, which present a way to construct longer codes over smaller field sizes compared to RS codes. 
	We limited ourselves to the case of three sub codes, a generalization to longer constructions seems possible.\\
	The simulation confirms the capability of the presented decoder to decode beyond half the minimum distance .
	Furthermore the given decoding principles are  straight-forward, can easily be implemented and yield a runtime in the scale of the underlying RS decoder.


\begin{thebibliography}{99}


\bibitem{RS60}
	Reed,I.S. and Solomon G., "Polynomial Codes over Certain Finite Fields"
  	\emph{SIAM Journal of Applied Math.,} vol. 8, 1960, pp. 300-304.
  
\bibitem{peterson}
	W. W. Peterson, "Encoding and error-correction procedures for Bose-Chaudhuri codes", \emph{IRE Transactions on Information Theory}, vol. IT-60, pp. 459-470, 1960.
  
\bibitem{BM}
	 Berlekamp, Elwyn , "Nonbinary BCH decoding" 
	\emph{IEEE transactions on information theory}, vol 14, 1968.

\bibitem{Sugiyama}
	Y. Sugiyama, M. Kasahara, S. Hirasawa, and T. Namekawa,
  	"A method for solvin key equation for decoding Goppa codes",
  	\emph{Inf. Contr.}, vol. 27, 1975, pp. 87--99.
  
\bibitem{Wu08} Y. Wu,
	"New list decoding algorithms for Reed--Solomon and BCH codes",
	\emph{IEEE Transactions on Information Theory}, vol. 54, no. 8, 2008, pp. 3611--3630.

\bibitem{Wu-GII} Y. Wu,
	"Generalized Integrated Interleaved  Codes",
	\emph{IEEE Transactions on Information Theory}, 2017
	
\bibitem{plotkin}
	M. Plotkin. "Binary codes with specified minimum distances," \emph{IEEE Transactions on Information Theory}, vol. 6, pp. 445-450, 1960.



\end{thebibliography}
\end{document}